\newtheorem{thm}{Theorem}[section]
\theoremstyle{remark}
\newtheorem{rk}{Remark}[section]
\theoremstyle{definition}
\newcommand{\xbar}[1]{%
   \kern0.15ex\hbox{%
     \vbox{%
       \hrule height 0.4pt % The actual bar
       \kern0.3ex%         % Distance between bar and symbol
       \hbox{%
         \kern-0.15em%      % Shortening on the left side
         \ensuremath{#1}%
         \kern-0.1em%      % Shortening on the right side
       }%
     }%
   \kern0.25ex}%
}
\newcommand{\xbarscr}[1]{%
   \kern0.2ex\hbox{%
     \vbox{%
       \hrule height 0.3pt % The actual bar
       \kern0.2ex%         % Distance between bar and symbol
       \hbox{%
         \kern-0.1em%      % Shortening on the left side
         \ensuremath{{}_{#1}}%
         \kern-0.15em%      % Shortening on the right side
       }%
     }%
   \kern0.25ex}%
}
\newcommand{\xbarscrscr}[1]{%
   \kern0.2ex\hbox{%
     \vbox{%
       \hrule height 0.3pt % The actual bar
       \kern0.2ex%         % Distance between bar and symbol
       \hbox{%
         \kern-0.1em%      % Shortening on the left side
         \ensuremath{{}_{{}_{#1}}}%
         \kern-0.15em%      % Shortening on the right side
       }%
     }%
   \kern0.25ex}%
}
\newcommand{\checkbxi}{\lefteqn{\boldsymbol{\xi}}\kern.13pc\check{\phantom{\xi}}\kern-.09pc}
\newtheorem{lemma}[subsection]{Lemma}
\newtheorem{cor}[subsection]{Corollary}
\numberwithin{equation}{section}
\begin{document}

\title[Thermodynamics of DNA-RNA renaturation]{
Thermodynamics of DNA-RNA renaturation}

\author{U. A. Rozikov}

\address{ U.Rozikov$^{a,b,c}$\begin{itemize}
 \item[$^a$] V.I.Romanovskiy Institute of Mathematics of Uzbek Academy of Sciences;
\item[$^b$] AKFA University, 1st Deadlock 10, Kukcha Darvoza, 100095, Tashkent, Uzbekistan;
\item[$^c$] Faculty of Mathematics, National University of Uzbekistan.
\end{itemize}}
\email{rozikovu@yandex.ru}

\begin{abstract}  We consider a new model which consists of a DNA together with a RNA. Here we assume that
 DNA is from a mammal  or bird but RNA comes from a virus. To study thermodynamic properties of this
 model we use methods of statistical mechanics, namely, the theory of Gibbs measures.
 We use these measures to describe phases (states) of the DNA-RNA system.
Using a Markov chain (corresponding to Gibbs measure) we give conditions (on temperature) of DNA-RNA renaturation.
\end{abstract}
\maketitle

{\bf Mathematics Subject Classifications (2010).} 92D20; 82B20; 60J10.

{\bf{Key words.}} DNA, RNA, temperature, Gibbs measure.

\section{Introduction}

Each molecule of DNA is a double helix formed by two complementary strands of nucleotides
held together by hydrogen bonds between $G+C$ and $A+T$ base pairs, where  $C$=cytosine, $G$=guanine, $A$=adenine,
and $T$=thymine. Duplication of the genetic
information occurs by the use of one DNA strand as a template for formation of a complementary strand.
The genetic information stored in an organism's DNA contains the instructions for all the proteins
the organism will ever synthesize. It is known that (see, for example, \cite{book}) genetic information
is carried in the linear sequence of nucleotides in DNA. Many experimental and theoretical
works have brought quantitative insights into DNA
base-pairing dynamics that is reviewed in \cite{MD}.

RNA\footnote{https://en.wikipedia.org/wiki/RNA} is a polymeric molecule essential in various biological
roles in coding, decoding, regulation and expression of genes. RNA is assembled as a chain of nucleotides,
but unlike DNA, RNA is found in nature as a single strand folded onto itself,
rather than a paired double strand. Cellular organisms use messenger RNA
 to convey genetic information (using the nitrogenous bases of $C, G, A$ and $U$=uracil) that directs
 synthesis of specific proteins.

 All viruses contain\footnote{https://micro.magnet.fsu.edu/cells/virus.html} nucleic acid, either
 DNA or RNA (but not both), and a protein coat, which encases the nucleic acid. Coronaviruses are a
 group of related RNA viruses that cause diseases in mammals and birds. In humans,
 these viruses cause respiratory tract infections that can range from mild to lethal.

In this paper we study thermodynamic properties of a model which consists a DNA
(from a mammal  or bird)  together with a RNA (from a virus).

Studying DNA's thermodynamics one wants to know how temperature affects
the nucleic acid structure of double-stranded DNA \cite{Man}.
There are few models of thermodynamics
of DNAs (\cite{Ca}, \cite{Pe}, \cite{Ta}). In the recent papers \cite{Rb}, \cite{Rp} we
gave  Ising and Potts models of DNAs and studied their thermodynamics.
Here we shall use the arguments of these papers to study thermodynamic behavior
of a system consisting a DNA and an RNA.

The paper is organized as follows.
In Section 2 we give main definitions and define our model of DNA and RNA.
Moreover, we give a system of functional equations, each solution of which
defines a consistent family of finite-dimensional Gibbs distributions and
guarantees existence of thermodynamic limit for such distributions. These Gibbs
measures are important to describe states of the DNA-RNA system.
Section 3 is devoted to translation invariant Gibbs measures (i.e. constant
solutions of the system of functional equations).
We show uniqueness of translation invariant Gibbs measure (depending on parameters of the model).
In the last section by properties of Markov chains (corresponding to Gibbs measures) we give conditions (on temperature) of DNA-RNA renaturation

\section{System of equations describing of DNA-RNA renaturation}

 The structure of DNA, at the microscopic level, can be described using
 ideas from statistical physics (see \cite{Sw}, \cite{T}),
 where by a single DNA strand is modelled as a stochastic system
 of interacting bases with long-range correlations. This approach makes an
 important connection between the structure of DNA sequence and {\it temperature};
 e.g., phase transitions in such a system may be interpreted as a conformational (topological) restructuring.

 In this section we consider a new model which consists a DNA together with an RNA.
  The bases in nucleic acids can interact via hydrogen bonds.
   Base pairing stabilizes the native three-dimensional structures of DNA and RNA.
 Our interpretation of this system is that
 RNA tries to denature the DNA and renature a new DNA by adding its own nitrogenous bases (as analogue of corona virus's RNA).

 A DNA denaturation process is the breaking of the hydrogen bonds connecting the two stands under treatment by heat \cite{D}, \cite{T}. The
process consists of the splitting of DNA base pairs, or nucleotides, resulting in the separation of two complementary DNA strands\footnote{https://www.ncbi.nlm.nih.gov/books/NBK21514/}.
 In the past decades DNA denaturation  attracted the interest of various researchers, which introduced and studied statistical and dynamical models of this fundamental biological process (see \cite{Ku},   recent paper \cite{DNA} and the references therein).

It is known that in a DNA each $A+T$  pair connected by two hydrogen bonds, while each $C+G$ pair connected
by three hydrogen bonds. Therefore in this section we model them as (spin value) $2=A+T$, $3=C+G$.
A melted (broken) under treatment by heat
hydrogen bond assigned to (spin value) $0$.
The base pairs $A+T$ (in DNA), and $A+U$ (in RNA) considered as identical in process of renaturation of DNA from the RNA (of the virus).

Then a DNA can be considered as a ladder shown in Fig. \ref{fi2}. An RNA is a one-dimensional line (also showed in Fig. \ref{fi2}).
Thus our (spin) system is a double-ladder levels of which denoted by integer numbers $n\in \mathbb Z$.
Assume a base pair is either broken or intact.

\begin{figure}[h]
   \includegraphics[width=13cm]{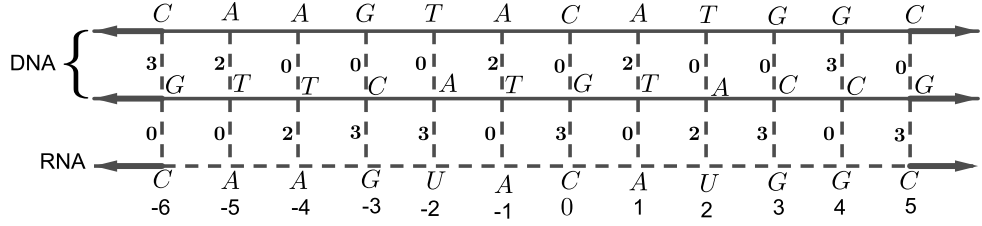}\\
  \caption{The common picture of DNA and RNA (the double-ladder). Configuration consisting $0, 2, 3$ is the state of DNA-RNA denaturation-renaturation at a given temperature $T$. The value $0$ means that the corresponding pair is broken (melted).}\label{fi2}
\end{figure}

To each base pair of level $i\in \mathbb Z$ assign two parameters $d_i$ (to base pair of DNA)  and
$r_i$ (to the base pair of renatured DNA, i.e. between old DNA and RNA). These parameters are defined as
$$d_i=\left\{\begin{array}{lll}
0, \ \ \mbox{if the} \ \ i\mbox{th base pair of DNA is broken}\\[2mm]
2, \ \ \mbox{if the} \ \ i\mbox{th base pair of DNA is intact and at state} \ \ A+T\\[2mm]
3, \ \ \mbox{if the} \ \ i\mbox{th base pair of DNA is intact and at state} \ \ C+G,
\end{array}\right.
$$

$$r_i=\left\{\begin{array}{lll}
0, \ \ \mbox{if the} \ \ i\mbox{th base pair between DNA and RNA is broken}\\[2mm]
2, \ \ \mbox{if the} \ \ i\mbox{th base pair between DNA and RNA is intact and at state} \ \ A+T\\[2mm]
3, \ \ \mbox{if the} \ \ i\mbox{th base pair between DNA and RNA is intact and at state} \ \ C+G.
\end{array}\right.
$$
Since RNA (as corona virus) will break base pair of DNA and puts its own pair, we have condition
\begin{equation}\label{ca}
d_ir_i=0, \ \ \mbox{for all level} \ \ i\in \mathbb Z.
\end{equation}
Thus the configuration space $\Omega$ of our system is build by configurations
$$d=\{d_i\in\{0,2,3\}: i\in \mathbb Z\}, \ \ r=\{r_i\in\{0,2,3\}: i\in \mathbb Z\},$$
as
$$\Omega=\left\{\sigma=(d,r)\in \{0,2,3\}^{\mathbb Z}\times \{0,2,3\}^{\mathbb Z}: d_ir_i=0, \forall i\in \mathbb Z\right\}.$$

For each $\sigma\in \Omega$ define its energy (Hamiltonian) by
\begin{equation}\label{eh}
H(\sigma)=H(d,r)=-J\sum_{i=-\infty}^{+\infty}\left(\delta(d_i, d_{i+1})+
\delta(r_i, r_{i+1})\right)-\alpha\sum_{i=-\infty}^{+\infty}\left(d_i+r_i\right),
\end{equation}
where $J\in \mathbb R$ is coupling constant between base pairs,
$\alpha\in \mathbb R$ is external field and $\delta$ is Kronecker delta:
$$\delta(a,b)=\left\{\begin{array}{ll}
1, \ \ \mbox{if} \ \ a=b\\[2mm]
0, \ \ \mbox{if} \ \ a\ne b.
\end{array}\right.$$

Denote by $\sigma_n$ the restriction of the configuration $\sigma\in \Omega$ on $\mathbb Z_n=\{-n, -n+1, \dots, n-1, n\}$ and by
$\Omega_n$ the set of all such configurations. In general, for a subset $A\subset \mathbb Z$ denote by $\Omega_A$ the set of
all configurations restricted on $A$.

Define a finite-dimensional distribution of a probability measure $\mu$ on $\Omega_n$ as
\begin{equation}\label{d*}
\mu_n(\sigma_n)=Z_n^{-1}\exp\left\{-\beta H_n(\sigma_n)+\sum_{m\in \{-n, n\}}h_{m, d_m, r_m}\right\},
\end{equation}
where $\beta=1/T$, $T>0$ is temperature,  $Z_n^{-1}$ is the normalizing factor,
\begin{equation}\label{hi}
h_{m, i, j}\in \mathbb R, \ \ i, j=0, 2, 3, \ \ m=-n, n
\end{equation} are real numbers and
$$H_n(\sigma_n)=-J\sum_{i=-n}^{n}\left(\delta(d_i, d_{i+1})+
\delta(r_i, r_{i+1})\right)-\alpha\sum_{i=-n}^{n}\left(d_i+r_i\right).$$

We say that the probability distributions (\ref{d*}) are compatible if for all
$n\geq 1$ and $\sigma_{n-1}\in \Omega_{n-1}$:
\begin{equation}\label{**}
\sum_{\omega_n\in \Omega_{\{-n,n\}}}\mu_n(\sigma_{n-1}\vee \omega_n)=\mu_{n-1}(\sigma_{n-1}).
\end{equation}
Here $\sigma_{n-1}\vee \omega_n$ is the concatenation of the configurations.
In this case there exists a unique measure $\mu$ on $\Omega$ such that,
for all $n$ and $\sigma_n\in \Omega_n$,
$$\mu(\{\sigma|_{\mathbb Z_n}=\sigma_n\})=\mu_n(\sigma_n).$$
Such a measure is called a {\it Gibbs measure} corresponding to the Hamiltonian
(\ref{eh}) and values (\ref{hi}).

For simplicity assume that
\begin{equation}\label{sh}
h_{-n, i, j}=h_{n, i, j}, \ \ i, j=0, 2, 3.
\end{equation}
Under this condition the following statement describes conditions on $h_{n, i, j}$ guaranteeing compatibility of $\mu_n(\sigma_n)$.

\begin{thm}\label{di} Probability distributions
$\mu_n(\sigma_n)$, $n=1,2,\ldots$, in
(\ref{d*}) are compatible iff for any $n\geq 1$
the following hold

  \begin{equation}\label{d***}
  \begin{array}{llll}
  x_{n-1}={\theta+\eta^2(\theta^2x_n+y_n)+\eta^3(\theta u_n+v_n)\over
  \theta^2+\theta\eta^2(x_n+y_n)+\theta \eta^3(u_n+ v_n)}\\[3mm]
  y_{n-1}={\theta+\eta^2(x_n+\theta^2y_n)+\eta^3 (u_n+\theta v_n)\over
  \theta^2+\theta\eta^2(x_n+y_n)+\theta \eta^3(u_n+ v_n)}\\[3mm]
 u_{n-1}={\theta+\eta^2(\theta x_n+y_n)+\eta^3(\theta^2 u_n+ v_n)\over
  \theta^2+\theta\eta^2(x_n+y_n)+\theta \eta^3(u_n+ v_n)}\\[3mm]
  v_{n-1}={\theta+\eta^2(x_n+\theta y_n)+\eta^3 (u_n+\theta^2 v_n)\over
  \theta^2+\theta\eta^2(x_n+y_n)+\theta \eta^3(u_n+ v_n)}
  \end{array}
 \end{equation}

Here,
\begin{equation}\label{ap}
\begin{array}{lll}
\theta=\exp(J\beta), \ \ \eta=\exp(\alpha\beta),\\[2mm]
x_n=\exp\left(h_{n,0,2}-h_{n,0,0}\right), \ \  y_n=\exp\left(h_{n,2,0}-h_{n,0,0}\right),\\[2mm]
u_n=\exp\left(h_{n,0,3}-h_{n,0,0}\right), \ \  v_n=\exp\left(h_{n,3,0}-h_{n,0,0}\right).
\end{array}
\end{equation}
\end{thm}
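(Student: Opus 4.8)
The plan is to use the standard one-dimensional boundary-field recursion technique for Gibbs measures developed in the author's earlier papers \cite{Rb}, \cite{Rp}. The compatibility condition (\ref{**}) asks that marginalizing $\mu_n$ over the two outermost spins $(d_{-n},r_{-n})$ and $(d_n,r_n)$ reproduce $\mu_{n-1}$. First I would split the Hamiltonian as $H_n(\sigma_{n-1}\vee\omega_n)=H_{n-1}(\sigma_{n-1})$ minus the boundary terms that couple each outer spin to its unique neighbor via the bonds $(-n,-n+1)$ and $(n-1,n)$, together with the external-field contributions $-\alpha(d_{-n}+r_{-n}+d_n+r_n)$. This isolates exactly the dependence on $\omega_n$.

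Inserting this split into (\ref{d*}), the summand factors as $\exp\{-\beta H_{n-1}(\sigma_{n-1})\}$ times a left factor depending only on $(d_{-(n-1)},r_{-(n-1)})$ and a right factor depending only on $(d_{n-1},r_{n-1})$, since the interaction is nearest-neighbor and the boundary field in (\ref{d*}) sits only on $\pm n$. Summing over $\omega_n$ therefore produces a product $L(d_{-(n-1)},r_{-(n-1)})\,R(d_{n-1},r_{n-1})$, and by the symmetry assumption (\ref{sh}) the functions $L$ and $R$ coincide, where
\[
L(d',r')=\sum_{(a,b):\,ab=0}\theta^{\,\delta(a,d')+\delta(b,r')}\,\eta^{\,a+b}\,e^{h_{n,a,b}},
\]
with the sum running over the five admissible states $(a,b)\in\{(0,0),(0,2),(2,0),(0,3),(3,0)\}$ allowed by (\ref{ca}). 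Matching $Z_n^{-1}e^{-\beta H_{n-1}}L(d_{-(n-1)},\cdot)L(d_{n-1},\cdot)$ against $\mu_{n-1}(\sigma_{n-1})$ for every $\sigma_{n-1}$, and using that the two boundary spins range independently over all five states, forces $L(d',r')=c\,\exp\{h_{n-1,d',r'}\}$ for a single constant $c$ (absorbed into the ratio $Z_{n-1}/Z_n$). This common-constant conclusion is the conceptual heart of the argument.

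Taking the ratio $L(d',r')/L(0,0)$ cancels $c$, and dividing numerator and denominator by $e^{h_{n,0,0}}$ converts each Boltzmann weight $e^{h_{n,a,b}}$ into one of $1,x_n,y_n,u_n,v_n$ according to (\ref{ap}). Evaluating $L(0,0)$ gives the common denominator $\theta^2+\theta\eta^2(x_n+y_n)+\theta\eta^3(u_n+v_n)$, while $L(0,2),L(2,0),L(0,3),L(3,0)$ give the four numerators; the resulting ratios $x_{n-1}=L(0,2)/L(0,0)$, $y_{n-1}=L(2,0)/L(0,0)$, $u_{n-1}=L(0,3)/L(0,0)$, $v_{n-1}=L(3,0)/L(0,0)$ are precisely (\ref{d***}). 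The converse direction runs the same identity backwards: given (\ref{d***}), one defines $c$ through $L(0,0)$, fixes $Z_{n-1}/Z_n$ accordingly, and verifies that (\ref{**}) becomes an identity.

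I expect the clean part to be the factorization and the common-constant step; the genuinely error-prone part is the bookkeeping in evaluating $L(d',r')$ for each of the five neighbor states, since the exponent $\delta(a,d')+\delta(b,r')$ must be read off correctly for all twenty-five $(a,b;d',r')$ combinations. It is exactly this asymmetric distribution of $\theta$-powers (e.g.\ $\theta^2 x_n$ versus $\theta u_n$ in the numerator of $x_{n-1}$) that produces the non-symmetric placement of $\theta$ seen in (\ref{d***}), so the main care is in getting every Kronecker-delta evaluation right rather than in any subtle estimate.
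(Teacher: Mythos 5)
Your proposal is correct and is essentially the argument the paper invokes: the paper's proof is just a pointer to Theorem 2.1 of \cite{R}, which is precisely this standard marginalization--factorization--common-constant computation, and your bookkeeping checks out (I verified all five values $L(0,0)$, $L(0,2)$, $L(2,0)$, $L(0,3)$, $L(3,0)$ against the denominators and numerators of (\ref{d***}), including the asymmetric $\theta$-powers such as $\theta^2 x_n$ versus $\theta u_n$). The only point worth noting is that the paper's Hamiltonian sum formally runs to $i=n$ (involving the undefined $d_{n+1}$), a typo you implicitly and correctly repaired by taking the boundary bonds to be $(-n,-n+1)$ and $(n-1,n)$.
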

\begin{proof} The proof is similar to the proof of
Theorem 2.1 of \cite{R}.
\end{proof}

It is difficult to find general solutions to (\ref{d***}).

\begin{rk} For $\theta=1$ (i.e. $J=0$) the system (\ref{d***}) has unique solution $x_n=y_n=v_n=u_n=1$.
Therefore below we consider the case $\theta\ne 1$.
\end{rk}

\section{Translation-invariant solutions}
We assume that the unknowns do not depend on $n$, i.e. the value of each unknown is translation invariant. Therefore denote
$$x=\theta \eta^2 x_n, \ \ y=\theta \eta^2 y_n, \ \ u=\theta \eta^3 u_n, \ \ v=\theta \eta^3 v_n.$$
Define mapping
$$F: (x, y, u, v)\in \mathbb R^4_+\to (x', y', u', v')\in \mathbb R_+^4$$
as
 \begin{equation}\label{ds}
  \begin{array}{llll}
  x'=\eta^2 \cdot {\theta^2+\theta^2 x+y+\theta u+v\over
  \theta^2+x+y+u+v}\\[3mm]
  y'=\eta^2 \cdot {\theta^2+ x+\theta^2 y+ u+\theta v\over
  \theta^2+x+y+u+v}\\[3mm]
   u'=\eta^3 \cdot {\theta^2+\theta x+y+\theta^2 u+v\over
  \theta^2+x+y+u+v}\\[3mm]
  v'=\eta^3 \cdot {\theta^2+x+\theta y+ u+\theta^2 v\over
  \theta^2+x+y+u+v}
  \end{array}
 \end{equation}
Then the system (\ref{d***}) is reduced to the finding of fixed points of the mapping $F$, i.e.,
to solving of system $(x, y, u, z)=F(x, y, u, v)$.

Denote
$$M=\{ (x, y, u, v)\in \mathbb R^4_+: x=y, u=v\}.$$
It is easy to see that $F(M)\subset M$, i.e. $M$ is invariant with respect to $F$.

\subsection{Solutions in the set $M$}

Restricting $F$ on $M$ the fixed point problem reduced to the following system

 \begin{equation}\label{xys}
  \begin{array}{ll}
  x=\eta^2 \cdot {\theta^2+(\theta^2+1)x+(\theta+1)u\over
  \theta^2+2x+2u}\\[3mm]
  u=\eta^3 \cdot {\theta^2+(\theta+1)x+(\theta^2 +1)u\over
  \theta^2+2x+2u}.
  \end{array}
 \end{equation}

 From the first equation of this system we get
 \begin{equation}\label{xx}
 (\eta^2(1+\theta)-2x)u=
 2 x^2+(\theta^2-\eta^2(1+\theta^2))x-\eta^2\theta^2.
 \end{equation}

 1) {\it  Case:} $\eta^2(1+\theta)-2 x=0$. In this case we get
 $x={\eta^2(1+\theta)\over 2}$ and substituting this in the RHS of (\ref{xx})
 we get $\eta=\sqrt{{\theta\over 1+\theta}}$.
 Consequently, $x=x_0={\theta\over 2}$. For this value of $\eta$ and $x$ one can explicitly
 find unique positive value of $u=u_0(\theta)$.

 Thus for any $\theta>0$ and $\eta=\sqrt{{\theta\over 1+\theta}}$
 there exists unique solution $(x_0,u_0)$ for (\ref{xys}).

2) {\it Case:} $\eta^2(1+\theta)-2 x\ne 0$.
\begin{equation}\label{u}
u=u(x):={2 x^2+(\theta^2-\eta^2(1+\theta^2))x-
 \eta^2\theta^2\over \eta^2(1+\theta)-2 x}.
 \end{equation}
 Substituting this in the second equation of (\ref{xys}) we get
 \begin{equation}\label{ke}
 Ax^3+Bx^2+Cx+D=0,
 \end{equation}
where
 $$ \begin{array}{llllll}
 A=4(1-\eta),\\[3mm]
 B=2 [(\eta^3-\eta^2-\eta+1) \theta^2+(2\eta^3-\eta-1)\theta+\eta^2(3\eta-1)],\\[3mm]
 C=(-{\eta}^{5}+{\eta}^{3}+\eta^2-1)\theta^{3}+(-2{\eta}^{5}+4\eta^{3}-2\eta^2){\theta}^2+
 (-3\eta^{5}+\eta^{3}+\eta^2)\theta-2{\eta}^{5},\\[2mm]
D={\eta}^{2}\theta^2(\theta-{\eta}^{3}(1+\theta)).
\end{array}
$$
All of the roots of the cubic equation can be found\footnote{https://en.wikipedia.org/wiki/Cubic$_-$equation}.
We are interested in positive solutions, $x_i=x_i(\theta,\eta)$ of the cubic equation.
Moreover, the corresponding $u(x_i)$ defined in (\ref{u}) should be positive too.
Thus condition for parameters $(\theta, \eta)\in \mathbb R^2_+$  of
the existence of positive solutions can be explicitly written $ x_i(\theta,\eta)>0, \ \ u(x_i)>0$.
But the explicit solutions of the cubic  equation have some bulky formulas, therefore we do not present the solution here.
Instead we consider some concrete cases:

3) In the above-mentioned  case 1) we solved the first equation of the system (\ref{xys}) with respect to $u$.
Doing similar argument starting from the second equation of (\ref{xys}) and solving it with respect to $x$ one gets
 $u=u_1={\theta\over 2}$, if $\eta=\sqrt[3]{{\theta\over 1+\theta}}$. Corresponding to $u_1$ one can explicitly
 find unique positive value of $x=x_1(\theta)$.
 Thus for any $\theta>0$ and $\eta=\sqrt[3]{{\theta\over 1+\theta}}$
 we can explicitly give unique solution $(x_1, u_1)$ of (\ref{xys}).

4) {\it Case}: $\eta=1$. In this case the cubic equation is reduced to quadratic equation, which has unique positive solution:
$$x_2=x_2(\theta):={1\over 8}(\theta+2+\sqrt{17\theta^2+4\theta+4}\,).$$
Corresponding $u_2=u(x_2(\theta))$ is also positive. Note that $x_2(\theta)$ and $u_2(\theta)$ have value $+\infty$ (resp. 1/2) if $\theta\to \infty$ (resp. $\theta\to 0$).

Thus in the case $\eta=1$ the system (\ref{xys}) has unique solution $(x_2, u_2)$.

5) Several numerical analysis show that  for $\eta\ne 1$ again we have unique solution (see Fig. \ref{fi3} and \ref{fi3a}).
\begin{figure}[h]
   \includegraphics[width=7cm]{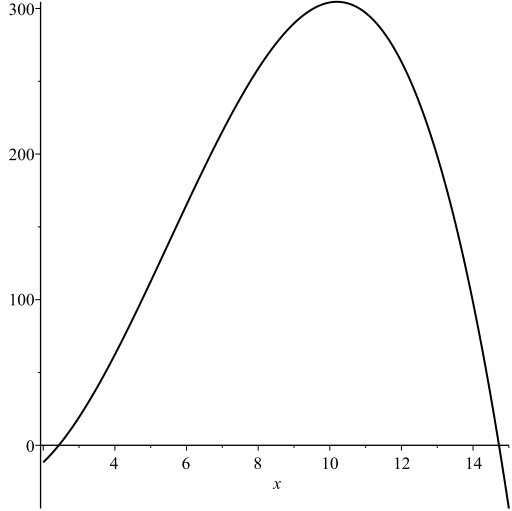}\\
  \caption{The graph of cubic polynomial (\ref{ke}) for $\theta=2$, $\eta=1.2$. In this case there are two positive roots of the polynomial, which approximately:
  $x_1=2.43547$, $x_2=14.72382$ (the negative solution: -0.63929).}\label{fi3}
\end{figure}
\begin{figure}[h]
   \includegraphics[width=7cm]{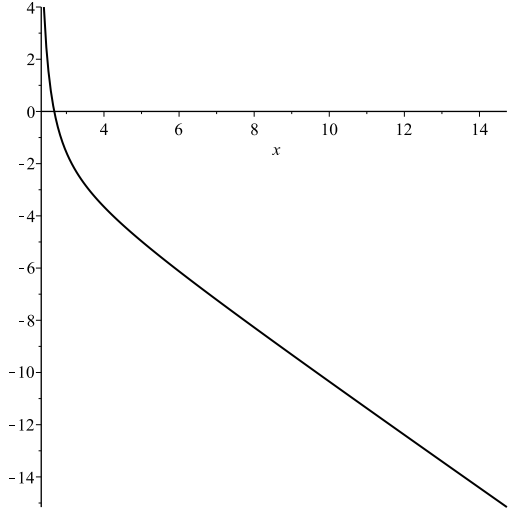}\\
  \caption{The graph of function $u$ defined in (\ref{u}) for $\theta=2$, $\eta=1.2$ on $[2.4, 14.73]$, which contains both positive roots shown in Fig.\ref{fi3}. Thus only one $u_1=u(x_1)$ is positive.}\label{fi3a}
\end{figure}

\subsection{Solutions in the set $\mathbb R_+^2\setminus M$.}
Recall that $F(x, y, u, v)=(x, y, u, v)$ has the form
\begin{equation}\label{da}
  \begin{array}{llll}
  x=\eta^2 \cdot {\theta^2+\theta^2 x+y+\theta u+v\over
  \theta^2+x+y+u+v}\\[3mm]
  y=\eta^2 \cdot {\theta^2+ x+\theta^2 y+u+\theta v\over
  \theta^2+x+y+u+v}\\[3mm]
   u=\eta^3 \cdot {\theta^2+\theta x+y+\theta^2 u+v\over
  \theta^2+x+y+u+v}\\[3mm]
  v=\eta^3 \cdot {\theta^2+ x+\theta y+ u+\theta^2 v\over
  \theta^2+x+y+u+v}
  \end{array}
 \end{equation}
Subtracting from the first equation of this system the second one (resp.  from the third equation of the last one) we get

\begin{equation}\label{db}
  \begin{array}{ll}
  x-y=L \cdot [(\theta-\theta^{-1})(x-y)+(1-\theta^{-1})(u-v)]\\[3mm]
   u-v=\eta L \cdot [(1-\theta^{-1})(x-y)+(\theta-\theta^{-1})(u-v)],
  \end{array}
 \end{equation}
where
$$L\equiv L(x,y,u,v)=\eta^2 \cdot {\theta\over
  \theta^2+x+y+u+v}.$$
  Recall that $\theta\ne 1$.
\begin{lemma}\label{sa} If $(x, y, u, v)$ is a solution to system (\ref{da}) then $x=y$ iff $u=v$.
\end{lemma}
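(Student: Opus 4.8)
The plan is to argue directly from the two identities in (\ref{db}), which any solution of (\ref{da}) satisfies after subtracting equations pairwise. Writing $p=x-y$ and $q=u-v$, these read $p=L[(\theta-\theta^{-1})p+(1-\theta^{-1})q]$ and $q=\eta L[(1-\theta^{-1})p+(\theta-\theta^{-1})q]$. The whole argument rests on two positivity/nonvanishing facts. First, $L=\eta^2\theta/(\theta^2+x+y+u+v)>0$ and $\eta>0$, since every coordinate of a solution lies in $\mathbb R_+$ and $\theta>0$; hence both prefactors $L$ and $\eta L$ are strictly positive. Second, the off-diagonal coupling $1-\theta^{-1}$ is nonzero, because $\theta\ne 1$ by the standing assumption recorded just before the lemma.

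For the forward implication I would assume $x=y$, i.e. $p=0$, and insert this into the first line of (\ref{db}). The term $(\theta-\theta^{-1})p$ drops out, leaving $0=L(1-\theta^{-1})q$. Since $L>0$ and $1-\theta^{-1}\ne 0$, this forces $q=0$, that is $u=v$.

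For the reverse implication I would symmetrically assume $u=v$, i.e. $q=0$, and insert this into the second line of (\ref{db}). Now $(\theta-\theta^{-1})q$ drops out, leaving $0=\eta L(1-\theta^{-1})p$. Since $\eta L>0$ and $1-\theta^{-1}\ne 0$, this forces $p=0$, that is $x=y$.

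I do not expect any genuine obstacle: the statement is an immediate consequence of the triangular structure of (\ref{db}) once the two nonvanishing conditions are noted. The only points deserving explicit mention are the strict positivity of $L$ (used to divide it out), which relies on all coordinates being in $\mathbb R_+$, and the nondegeneracy of the coupling $1-\theta^{-1}$, which relies on $\theta\ne 1$; both are furnished by the hypotheses and the standing assumption, so the two implications close cleanly.
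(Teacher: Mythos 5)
Your proof is correct and follows essentially the same route as the paper: using $x=y$ in the first equation of (\ref{db}) to force $u=v$, and $u=v$ in the second to force $x=y$, with $L>0$ and $\theta\ne 1$ as the key nonvanishing facts. You merely spell out the cancellation of the factors $L(1-\theta^{-1})$ and $\eta L(1-\theta^{-1})$ that the paper's one-line proof leaves implicit.
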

  \begin{proof} Since $L>0$, $\theta\ne 1$, if $x=y$ then from the first equation of (\ref{db}) we get $u=v$. If
   $u=v$ then from the second equation of (\ref{db}) we get $x=y$.
    \end{proof}

 Assume $x\ne y$. Then find $u-v$ from the first equation of (\ref{db})
 and substituting it in the second equation we obtain
   \begin{equation}\label{L}
   (\theta-1)^2(\theta+2)\eta L^2-(1+\eta)(\theta^2-1)L+\theta=0.
   \end{equation}
\begin{thm} The system (\ref{da}) does not have any solution in $\mathbb R_+^2\setminus M$.
\end{thm}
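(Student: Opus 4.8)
The plan is to play off two incompatible constraints on the single quantity $D:=\theta^2+x+y+u+v$ that any solution off $M$ would be forced to satisfy. By Lemma \ref{sa}, a solution in $\mathbb R^4_+\setminus M$ has both $x\neq y$ and $u\neq v$, so the vector $\vec p:=(x-y,\,u-v)^{\top}$ is nonzero. Substituting $L=\eta^2\theta/D$ into the difference equations (\ref{db}) and multiplying through by $D$ (using $\theta(\theta-\theta^{-1})=\theta^2-1$ and $\theta(1-\theta^{-1})=\theta-1$) turns them into $D\vec p=M\vec p$, where
\[
M=\begin{pmatrix}\eta^2(\theta^2-1) & \eta^2(\theta-1)\\ \eta^3(\theta-1) & \eta^3(\theta^2-1)\end{pmatrix}.
\]
Thus $D$ is a real, positive eigenvalue of $M$ with eigenvector $\vec p$; this is the first constraint, and it is just the eigenvalue form of the quadratic (\ref{L}).

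For the second constraint I would use the sum equations. Summing the first two equations of (\ref{da}) and, separately, the last two, the vector $\vec w:=(x+y,\,u+v)^{\top}$, which has strictly positive entries, satisfies a system in which the same matrix $M$ reappears. A short computation, in which the inhomogeneous terms collapse against $D=\theta^2+x+y+u+v$, yields the clean identity
\[
(D\,I-M)\,\vec w=2\eta^2 D\,(1,\eta)^{\top}.
\]
Since the right-hand side is strictly positive, this says $M\vec w< D\vec w$ entrywise while $\vec w>0$.

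It remains to show these two facts cannot coexist. For $\theta>1$ the matrix $M$ has strictly positive entries, so by the Collatz--Wielandt/Perron--Frobenius principle the strict subinvariance $M\vec w< D\vec w$ with $\vec w>0$ forces $\rho(M)<D$: pairing with the positive left Perron eigenvector $\vec z$ gives $\rho(M)\,\vec z^{\top}\vec w=\vec z^{\top}M\vec w< D\,\vec z^{\top}\vec w$, and $\vec z^{\top}\vec w>0$ yields $\rho(M)<D$. Then $D$ exceeds every eigenvalue of $M$ in modulus and so cannot itself be an eigenvalue, contradicting the first constraint. For $0<\theta<1$ the argument is even shorter: here $\mathrm{tr}\,M=\eta^2(1+\eta)(\theta^2-1)<0$ and $\det M=\eta^5\theta(\theta-1)^2(\theta+2)>0$, so both eigenvalues of $M$ are negative, whereas $D>\theta^2>0$; again $D$ is not an eigenvalue. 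In either case we reach a contradiction, so $\vec p=0$ and the solution lies in $M$.

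The main obstacle is the sharpness required in the last step. It is tempting to argue only from $D>\theta^2$ (equivalently $L<\eta^2/\theta$) together with (\ref{L}), but that is not enough: for suitable $(\theta,\eta)$ with $\theta>1$ the quadratic (\ref{L}) does possess a root in the admissible range $L\in(0,\eta^2/\theta)$, so the crude interval bound cannot by itself exclude a solution. What rescues the proof is that $D$ is pinned to the actual solution through the sum equations, and the eigenvalue comparison $D>\rho(M)$ extracted from them is strictly stronger than $D>\theta^2$. Producing the identity $(D\,I-M)\vec w=2\eta^2 D(1,\eta)^{\top}$ and then invoking the strict form of Perron--Frobenius is therefore the heart of the argument.
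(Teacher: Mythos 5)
Your proof is correct, and it takes a genuinely different route from the paper's. The paper sets $t=(u-v)/(x-y)$ and splits into the cases $t>0$ and $t<0$: it solves the quadratic (\ref{ne}) for $t$, computes $L$ explicitly, rewrites the fixed-point problem as a singular $4\times 4$ linear system, and for $t>0$, $\theta>1$ exhibits the full one-parameter family (\ref{mub}) of off-diagonal solutions, observing that they force $x<0$ and $u<0$; the case $t<0$ is excluded by a sign condition on the diagonal entry $A_2$, and $\theta<1$ by the sign of $B$. You compress all of this into $2\times 2$ spectral data: subtracting the paired equations of (\ref{da}) shows that $D=\theta^2+x+y+u+v$ is an eigenvalue of your matrix with eigenvector $(x-y,\,u-v)^\top\neq 0$ (and indeed the characteristic equation, rewritten in $L=\eta^2\theta/D$, is exactly (\ref{L})), while adding them yields the identity $(D\,I-M)\vec{w}=2\eta^2D\,(1,\eta)^\top$ --- I verified this computation; the inhomogeneous terms do collapse as claimed, since e.g.\ $(\theta^2+1)-(\theta^2-1)=2$ and $(\theta+1)-(\theta-1)=2$ reproduce $2\eta^2(\theta^2+x+y+u+v)$. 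For $\theta>1$ the strict subinvariance plus the left Perron eigenvector gives $\rho(M)<D$, and for $\theta<1$ the trace and determinant signs (with realness of the spectrum guaranteed by the positive product of the off-diagonal entries) make both eigenvalues negative; either way $D$ cannot be an eigenvalue, a contradiction. Your route avoids the case split on the sign of $t$ entirely --- in fact you never divide by $x-y$, so you need only $\vec{p}\neq 0$ rather than the full strength of Lemma \ref{sa} --- and it avoids explicit root formulas and the rank analysis of the singular system; your observation that the crude bound $D>\theta^2$ alone would not suffice, and that the sum equations are what pin $D$ strictly above $\rho(M)$, correctly identifies the crux. What the paper's computation buys instead is explicitness: it locates exactly where the off-diagonal fixed points would live (the negative orthant, via (\ref{mub})), which your spectral contradiction does not exhibit. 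One cosmetic caution if this were merged into the text: your symbols $M$ and $D$ collide with the paper's invariant set $M$ and the cubic coefficient $D$ of (\ref{ke}), so they should be renamed.
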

\begin{proof}
  From Lemma \ref{sa} it follows that in $\mathbb R_+^2\setminus M$ may only exist solutions with $x\ne y$ and $u\ne v$.
  Therefore we denote $t={u-v\over x-y}$.

  {\it Case 1:} $t>0$. Assuming $t>0$ from (\ref{db}) we get
  \begin{equation}\label{ne}
  t=\eta\cdot {1+(1+\theta)t\over 1+\theta+t} \ \ \Leftrightarrow \ \ t^2+(1-\eta)(1+\theta)t-\eta=0.
  \end{equation}
  The last equation has unique positive root:
  $$t_1={1\over 2}((\eta-1)(1+\theta)+\sqrt{[(\eta-1)(1+\theta)]^2+4\eta}\,).$$
Thus $u-v=t_1\cdot (x-y)$. Using this from the first equation of (\ref{db}) we get
\begin{equation}\label{ax}1=L\cdot [(\theta-\theta^{-1})+(1-\theta^{-1})t_1]\ \ \Leftrightarrow \ \ L={\eta^2\theta\over
  \theta^2+x+y+u+v}={\theta\over (\theta-1)(1+\theta+t_1)}.
  \end{equation}
One can see that $L$ satisfies (\ref{L}).

By the last formula we get
\begin{equation}\label{dc}
  \begin{array}{llll}
  x=B \cdot [\theta+\theta x+\theta^{-1}y+u+\theta^{-1}v]\\[3mm]
  y=B \cdot [\theta+\theta^{-1} x+\theta y+\theta^{-1} u+v]\\[3mm]
   u=\eta B \cdot [\theta+x+\theta^{-1}y+\theta u+\theta^{-1}v]\\[3mm]
  v=\eta B \cdot [\theta+\theta^{-1} x+y+\theta^{-1} u+\theta v],
  \end{array}
 \end{equation}
where the constant $B$ is
$$B=B(\theta,\eta)= {\theta\over (\theta-1)(1+\theta+t_1)}.$$

 {\it Case 1.1.:} $\theta<1$. It is clear that $(\ref{dc})$ does not have
 any positive solution if $\theta<1$ (because in this case $B<0$).

  {\it Case 1.2.:} $\theta>1$. In this case the system is a linear system of equation of the form $\mathbf{M}\mathbf{v}=\mathbf{b}$, where
  \begin{equation}\label{M}\mathbf{M}=\left(\begin{array}{cccc}
  A&1& \theta&1\\[3mm]
 1&A &1 &\theta\\[3mm]
 \theta &1&C& 1\\[3mm]
  1& \theta & 1& C
  \end{array}\right),
  \ \  A=\theta^2-{\theta\over B}, \, C= \theta^2 -{\theta\over \eta B}, \ \
  \mathbf{v}=\left(\begin{array}{c}
 x\\[3mm]
y\\[3mm]
u\\[3mm]
 v
  \end{array}\right), \ \ \mathbf{b}=-\theta^2\left(\begin{array}{c}
1\\[3mm]
1\\[3mm]
1\\[3mm]
1
  \end{array}\right).\end{equation}

We are interested in positive solutions of the system.
By this system of linear equations we get
\begin{equation}\label{dil}
t_1={u-v\over x-y}={1-A\over \theta-1}={\theta-1\over 1-C} \ \ \Leftrightarrow \ \ AC=A+C+\theta^2-2\theta.
\end{equation}
Using formula of $A$, $C$ and $t_1$ one can see that (\ref{dil}) is satisfied.
Moreover, we have
\begin{equation}\label{di}
\det(\mathbf{M})= (AC+A+C-\theta^2-2\theta)(AC-A-C-\theta^2-2\theta).
\end{equation}
In case $\det(\mathbf{M})\ne 0$ the system  $\mathbf{M}\mathbf{v}=\mathbf{b}$ has unique solution with $x=y$ and $u=v$.
To have its other solutions (with the condition
$x\ne y$ and $u\ne v$) we need to the condition $\det(\mathbf{M})=0$ which by (\ref{dil}) is satisfied and rank$(\mathbf{M})=3$. Solving the linear system $\mathbf{M}\mathbf{v}=\mathbf{b}$, under condition (\ref{dil}), we explicitly obtain infinitely
many solutions:
\begin{equation}\label{mub}
x=- {(\theta^2+v)t_1+v\over t_1(1+t_1)}, \ \ y={v\over t_1}, \ \ u=- {(\theta^2+v)t_1+v\over 1+t_1}, \ \ v>0.
\end{equation}
Thus $x<0$ and $u<0$, i.e., there is no positive solution $x\ne y$ and $u\ne v$.

{\it Case 2:} $t<0$. In this case from (\ref{ne}) we get
   unique negative root:
  $$t=t_2={1\over 2}((\eta-1)(1+\theta)-\sqrt{[(\eta-1)(1+\theta)]^2+4\eta}\,).$$

Thus $u-v=t_2\cdot (x-y)$. Using this from the first equation of (\ref{db}) we get
$$1=L\cdot [(\theta-\theta^{-1})+(1-\theta^{-1})t_2]\ \ \Leftrightarrow \ \ \theta=L\cdot (\theta-1)(1+\theta+t_2).$$

In the last equality,  since $\theta>0$ and $L>0$, it is necessary that $(\theta-1)(1+\theta+t_2)>0$.
It is easy to see that $t_2>-1-\theta$. Consequently, the system may have solution  only for $\theta>1$.
Therefore we have
\begin{equation}\label{ax2}{\eta^2\theta\over
  \theta^2+x+y+u+v}={\theta\over (\theta-1)(1+\theta+t_2)}.
  \end{equation}
By the last formula we get
\begin{equation}\label{dc2}
  \begin{array}{llll}
  x=B_2 \cdot [\theta+\theta x+\theta^{-1}y+u+\theta^{-1}v]\\[3mm]
  y=B_2 \cdot [\theta+\theta^{-1} x+\theta y+\theta^{-1} u+v]\\[3mm]
   u=\eta B_2 \cdot [\theta+x+\theta^{-1}y+\theta u+\theta^{-1}v]\\[3mm]
  v=\eta B_2 \cdot [\theta+\theta^{-1} x+y+\theta^{-1} u+\theta v],
  \end{array}
 \end{equation}
where the constant $B_2$ is
$$B_2=B_2(\theta,\eta)= {\theta\over (\theta-1)(1+\theta+t_2)}.$$

Thus for $\theta>1$ the system is a linear system of equation of the form $\mathbf{N}\mathbf{v}=\mathbf{b}$ where
  \begin{equation}\label{N}\mathbf{N}=\left(\begin{array}{cccc}
  A_2&1& \theta&1\\[3mm]
 1&A_2 &1 &\theta\\[3mm]
 \theta &1&C_2& 1\\[3mm]
  1& \theta & 1& C_2
  \end{array}\right),
  \ \  A_2=\theta^2-{\theta\over B_2}, \, C_2= \theta^2 -{\theta\over \eta B_2}, \ \
  \mathbf{v}=\left(\begin{array}{c}
 x\\[3mm]
y\\[3mm]
u\\[3mm]
 v
  \end{array}\right), \ \ \mathbf{b}=-\theta^2\left(\begin{array}{c}
1\\[3mm]
1\\[3mm]
1\\[3mm]
1
  \end{array}\right).\end{equation}

To have a positive solution of this system it is necessary that
$$A_2=\theta^2-{\theta\over B_2}<0, \, C_2= \theta^2 -{\theta\over \eta B_2}<0.$$
But we have
$$A_2=\theta^2-{\theta\over B_2}<0 \ \ \Leftrightarrow \ \ \theta-{\theta-1\over \theta}(1+\theta+t_2)<0 \ \ \Leftrightarrow \ \ 1<(\theta-1)t_2.$$
The last equality does not hold because $\theta>1$ and $t_2<0$.
Thus  in the case $t<0$ the system (\ref{dc2}) does \emph{not} have any positive solution.
This completes the proof of theorem.
\end{proof}
\section{Gibbs measures: Conditions of DNA-RNA renaturation}

Let $\mu$ be the Gibbs measure corresponding to a solution $(x, y, u, v)$ of the system (\ref{da}).

The measure $\mu$
defines joint distribution
$$\mu\Big(\sigma(n)=(d_n,r_n), \sigma(n+1)=(d_{n+1}, r_{n+1})\Big)= $$
$$
\frac{1}{Z} \exp\Big(J\beta \left(\delta(d_n, d_{n+1})+
\delta(r_n, r_{n+1})\right)+\sum_{m\in \{n, n+1\}}\left[\alpha\beta\left(d_{m}+r_{m}\right)+h_{m, d_m, r_m}\right]\Big),$$
where $Z$ is normalizing factor.

From this,  the relation between the solutions $(x, y, u, v)$ and the transition matrix
for the associated Markov chain  is obtained
from the formula of the conditional probability. The transition matrix of this Markov chain is defined as follows

\begin{equation}\label{pe}\mathbb P=\left( P_{ij}\right)_{i,j=\hat 1, \hat 2,...,\hat 5}=
\left(  \begin{array}{ccccc}
 {\theta^2\over Z_1}& {x\over Z_1}&{y\over Z_1}&{u\over Z_1}&{v\over Z_1}\\[2mm]
 {\theta\over Z_2}&{\theta x\over Z_2}&{\theta^{-1}y\over Z_2}&{u\over Z_2}&{\theta^{-1}v\over Z_2}\\[2mm]
 {\theta\over Z_3}&{\theta^{-1} x\over Z_3}&{\theta y\over Z_3}&{\theta^{-1} u\over Z_3}&{v\over Z_3}\\[2mm]
 {\theta\over Z_4}&{x\over Z_4}&{\theta^{-1}y\over Z_4}&{\theta u\over Z_4}&{\theta^{-1}v\over Z_4}\\[2mm]
 {\theta\over Z_5}&{\theta^{-1} x\over Z_5}&{y\over Z_5}&{\theta^{-1} u\over Z_5}&{\theta v\over Z_5}
  \end{array}\right),
\end{equation}
where
\begin{equation}\label{hat}
\hat 1=(0,0), \ \ \hat 2=(0,2), \ \ \hat 3=(2,0), \ \ \hat 4=(0,3), \ \ \hat 5=(3,0);
\end{equation}
$ Z_1=\theta^2+x+y+u+v, \ \ Z_2= \theta+\theta x+\theta^{-1}y+u+\theta^{-1}v,$
$$Z_3=\theta+\theta^{-1} x+\theta y+\theta^{-1} u+v, \ \
Z_4=\theta+x+\theta^{-1}y+\theta u+\theta^{-1}v, \ \
Z_5=\theta+\theta^{-1} x+y+\theta^{-1} u+\theta v,$$
and $(x,y,u,v)$ a solution of system (\ref{da}) (which depends on both parameters $\theta$ and $\eta$).

Since $\mathbb P$ is a positive stochastic matrix there exists
unique probability vector $\pi=(\pi_1, \dots, \pi_5)$ which satisfies the system of linear equations $\pi \mathbb P=\pi$ (i.e.  $\pi$ is stationary distribution).
Note that this linear system can be explicitly solved. Its solution
$\pi$ depends on both parameters $\theta$ and $\eta$ subject to the constraint that elements must sum to 1.
But coordinates of the vector $\pi$ has a bulky form. Therefore we do not present it here.

The following is a consequence of known (see p. 55 of \cite{Ge}) ergodic theorem for positive
stochastic matrices.
\begin{thm}\label{to} For the matrix $\mathbb P$ defined in (\ref{pe}) and
its stationary distribution $\pi$ the following holds
$$\lim_{n\to \infty} X\mathbb P^n =\pi$$
for all initial probability vector $X$.
\end{thm}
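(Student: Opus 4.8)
The plan is to recognize Theorem~\ref{to} as a direct application of the Perron--Frobenius ergodic theorem for primitive stochastic matrices (precisely the statement on p.~55 of \cite{Ge}), so that the real content reduces to verifying the two hypotheses of that theorem for the specific matrix $\mathbb P$ of (\ref{pe}): that $\mathbb P$ is row-stochastic, and that it is strictly positive.

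First I would check stochasticity. By construction the $i$-th normalizing constant $Z_i$ is exactly the sum of the five numerators appearing in row $i$ of (\ref{pe}); for instance the entries of the first row are $\theta^2, x, y, u, v$, each divided by $Z_1=\theta^2+x+y+u+v$, and similarly for rows $2$ through $5$ with $Z_2,\dots,Z_5$. Hence every row of $\mathbb P$ sums to $1$, i.e. $\mathbb P\mathbf 1=\mathbf 1$, where $\mathbf 1$ denotes the all-ones column vector.

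Next I would verify strict positivity. Recall $\theta=\exp(J\beta)>0$, and the relevant fixed point $(x,y,u,v)$ is the positive solution produced in Section~3, which necessarily lies in $M$ since outside $M$ there is no positive solution. Thus $x,y,u,v>0$ and $\theta^{-1}>0$, so every entry of $\mathbb P$ is a ratio of a strictly positive numerator to a strictly positive $Z_i$; therefore $P_{ij}>0$ for all $i,j$. A strictly positive stochastic matrix is primitive, hence irreducible and aperiodic, and the ergodic theorem applies.

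Finally I would assemble the conclusion. By Perron--Frobenius the spectral radius of $\mathbb P$ equals $1$, this eigenvalue is simple, and all remaining eigenvalues have modulus strictly less than $1$; the associated left eigenvector, normalised to a probability vector, is exactly the stationary distribution $\pi$ with $\pi\mathbb P=\pi$, already noted to be unique before the statement. Consequently $\mathbb P^n\to \Pi$ entrywise, where $\Pi=\mathbf 1\,\pi$ is the rank-one matrix every row of which equals $\pi$. For any initial probability vector $X$ one then has $X\mathbb P^n\to X\Pi=(X\mathbf 1)\,\pi=\pi$, using $X\mathbf 1=\sum_i X_i=1$, which is the claimed limit. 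The only point that needs any care, and it is a mild one, is confirming strict positivity of the fixed point $(x,y,u,v)$ feeding into $\mathbb P$; once that is in hand, the convergence is immediate from the cited theorem and no genuine obstacle remains.
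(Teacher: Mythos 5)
Your proposal is correct and takes essentially the same route as the paper, whose entire proof consists of citing the ergodic theorem for positive stochastic matrices on p.~55 of \cite{Ge}. Your explicit verification of the hypotheses --- that each $Z_i$ is by construction the sum of the numerators in row $i$ so $\mathbb P$ is stochastic, and that all entries are strictly positive since $\theta>0$ and any fixed point of $F$ in $\mathbb R_+^4$ has strictly positive coordinates --- simply spells out what the paper leaves implicit.
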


Recall
$r_i=0$ means that RNA does not renature DNA at level $i\in \mathbb Z$.

Thus for a given DNA and RNA we say that the RNA (virus) do not destroy the DNA if
for any $\epsilon>0$ there exists $N\geq 1$ such that for any $n\geq N$ the following inequality holds
$$\mu(\Omega^0_n)>1-\epsilon,$$
where $\mu$ is Gibbs measure corresponding to a solution of system (\ref{d***})
and
$$\Omega^0_n=\{\sigma_n=\{(d_i, r_i)\}\in \Omega_n: r_i=0, \forall i\in \mathbb Z_n\}.$$
Note that each element of $\Omega^0_n$ defines a DNA, which has thermodynamic behavior.

For the measure corresponding to a solution $(x,y,u,v)$  of system (\ref{da}) we have (Markov measure):

$$\mu(\Omega^0_n)=\sum_{\sigma_n\in \Omega^0_n}\pi_{\sigma(-n)}\prod_{i=-n}^n P_{\sigma(i),\sigma(i+1)},$$
where $\pi_{\sigma(-n)}\in \{\pi_1, \dots, \pi_5\}$ (a coordinate of the stationary distribution)
and $P_{\sigma(i),\sigma(i+1)}\in \{P_{\hat i, \hat j}: \ \ {\rm with \ \ odd} \ \ i,j\}$ see (\ref{hat}).

 For a given solution $(x,y,u,v)$ corresponding to it value $\mu(\Omega^0_n)$ depends only on parameters $\theta, \eta$ and $n$, i.e.
 $\mu_n(\theta, \eta):=\mu(\Omega^0_n)$. Using explicit formula of solution $(x,y,u,v)$ one can explicitly
   calculate $\mu_n(\theta, \eta)$. But it will have a bulky form.

For fixed parameters $J$ and $\alpha$ of the model (\ref{eh}) the parameters $\theta$ and $\eta$ are functions of temperature
  $T={1\over \beta}$ (see (\ref{ap})). Therefore for fixed parameters of the model, the quantity $\mu_n(T):=\mu_n(\theta, \eta)$
   is the function of temperature $T$ and $n$ only.   By Theorem \ref{to} and above formulas of matrices we have the following:

\begin{cor} For given parameters $J$ and $\alpha$ of the model (\ref{eh}) RNA of the virus do not destroy DNA (with respect to measure $\mu$)
if \textbf{temperature} $T$ satisfies the following condition:
 $\forall\epsilon>0$ there exists $N\geq 1$ such that  $\forall n\geq N$ the following inequality holds
$$\mu_n(T)>1-\epsilon.$$
\end{cor}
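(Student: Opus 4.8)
The plan is to unwind the definition of \emph{RNA does not destroy DNA} and show that, once the Gibbs measure $\mu$ is expressed as the law of its associated Markov chain, this definition is literally the stated condition on $\mu_n(T)$, with Theorem \ref{to} supplying the asymptotic control. First I would recall that for a solution $(x,y,u,v)$ of (\ref{da}) the measure $\mu$ is the law of the stationary Markov chain with transition matrix $\mathbb P$ from (\ref{pe}) and initial distribution its stationary vector $\pi$. This yields, for the cylinder event $\Omega^0_n$, the representation
$$\mu(\Omega^0_n)=\sum_{\sigma_n\in \Omega^0_n}\pi_{\sigma(-n)}\prod_{i=-n}^{n}P_{\sigma(i),\sigma(i+1)},$$
in which every path in the sum is confined to the three states carrying $r_i=0$, namely $\hat 1=(0,0)$, $\hat 3=(2,0)$ and $\hat 5=(3,0)$ in the labeling (\ref{hat}). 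By definition $\mu_n(\theta,\eta):=\mu(\Omega^0_n)$, and for fixed $J,\alpha$ the parameters $\theta,\eta$ are functions of $T$ through (\ref{ap}); hence $\mu_n(T)=\mu(\Omega^0_n)$ is a genuine, finite, explicitly computable function of $T$ and $n$.

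Next I would exploit that the events are nested as subsets of $\Omega$: since $\{r_i=0,\ \forall i\in\mathbb Z_{n+1}\}\subseteq\{r_i=0,\ \forall i\in\mathbb Z_{n}\}$, the sequence $\mu_n(T)=\mu(\Omega^0_n)$ is non-increasing in $n$ and bounded in $[0,1]$, hence convergent, with $\lim_{n\to\infty}\mu_n(T)=\mu(\{r_i=0,\ \forall i\in\mathbb Z\})$ by continuity of $\mu$ along the decreasing family. Because $\mu_n(T)$ is non-increasing and bounded above by $1$, the requirement that for every $\epsilon>0$ there be an $N$ with $\mu_n(T)>1-\epsilon$ for all $n\geq N$ is equivalent to $\lim_{n\to\infty}\mu_n(T)=1$. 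Thus the qualitative statement ``RNA does not destroy DNA'', stated in the text as $\forall\epsilon>0\,\exists N\,\forall n\geq N:\ \mu(\Omega^0_n)>1-\epsilon$, coincides verbatim with the condition on $T$ asserted in the corollary.

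Theorem \ref{to} enters precisely to make this rigorous and effective: as $\mathbb P$ is a positive stochastic matrix it is ergodic with unique stationary distribution $\pi$, so $X\mathbb P^n\to\pi$ for every initial $X$, and the restriction of the chain to the sub-block indexed by $\{\hat 1,\hat 3,\hat 5\}$ (the $r_i=0$ states) is a strict sub-stochastic matrix whose iterates control $\mu_n(T)$. This identifies $\mu_n(T)$ with quantities built from $\pi$ and $\mathbb P$ alone, so that for fixed $J,\alpha$ it is a function of $T$ only, exactly as the corollary requires.

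I expect the only step needing care — rather than a serious analytic obstacle — to be the bookkeeping under the labeling (\ref{hat}): one must verify that imposing $r_i=0$ for all $i\in\mathbb Z_n$ is precisely the restriction of the chain to the odd-indexed states $\{\hat 1,\hat 3,\hat 5\}$, so that $\mu_n(T)$ is governed by the corresponding sub-block of $\mathbb P$ together with the coordinates of $\pi$ delivered by Theorem \ref{to}. Beyond this identification the corollary is a direct reformulation of the definition in terms of the temperature-dependent quantity $\mu_n(T)$, and no further estimate is required to complete the proof.
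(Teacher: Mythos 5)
Your proposal is correct and follows essentially the same route as the paper, which presents the corollary as a direct consequence of Theorem \ref{to} together with the Markov-measure representation $\mu(\Omega^0_n)=\sum_{\sigma_n\in \Omega^0_n}\pi_{\sigma(-n)}\prod_i P_{\sigma(i),\sigma(i+1)}$ over the $r_i=0$ states $\hat 1,\hat 3,\hat 5$ and the fact that for fixed $J,\alpha$ the parameters $\theta,\eta$ depend on $T$ alone via (\ref{ap}), so that the corollary is a reformulation of the definition of ``RNA does not destroy DNA'' in terms of $\mu_n(T)$. Your additional observations --- that the events are nested so $\mu_n(T)$ is non-increasing in $n$, hence the stated condition is equivalent to $\lim_{n\to\infty}\mu_n(T)=1$, with the sub-stochastic block of $\mathbb P$ on $\{\hat 1,\hat 3,\hat 5\}$ controlling the decay --- are correct refinements that the paper does not spell out, but they do not change the underlying argument.
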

Since $\mu_n(T)$ has a bulky form to check this condition one can use numerical analysis.

\end{document}